\begin{document}

\title{A choice-based axiomatization of Nash equilibrium\footnote{Financial support from the OP Research Foundation (grant nos. 20240089 and 20250085) is gratefully acknowledged.}}
\author{Michele Crescenzi\\ \href{mailto:michele.crescenzi@helsinki.fi}{michele.crescenzi@helsinki.fi}}
\affil{University of Helsinki and Helsinki Graduate School of Economics}

\newtheorem{proposition}{Proposition}
\newtheorem{theorem}{Theorem}
\newtheorem*{theorem*}{Theorem}
\newtheorem*{definition*}{Definition}
\newtheorem{assumption}{Assumption}
\newtheorem{definition}{Definition}
\newtheorem{lemma}{Lemma}
\newtheorem{claim}{Claim}
\newtheorem{remark}{Remark}
\newtheorem*{remark*}{Remark}
\newtheorem{corollary}{Corollary}
\theoremstyle{definition}
\newtheorem{example}{Example}
\newtheorem*{example*}{Example}

\maketitle

\begin{abstract}
An axiomatic characterization of Nash equilibrium is provided for games in normal form. The Nash equilibrium correspondence is shown to be fully characterized by four simple and intuitive axioms, two of which are inspired by contraction and expansion consistency properties from the literature on abstract choice theory. The axiomatization applies to Nash equilibria in pure and mixed strategies alike, to games with strategy sets of any cardinality, and it does not require that players' preferences have a utility or expected utility representation.

\bigskip

JEL CLASSIFICATION: C72
\bigskip

KEYWORDS: Nash equilibrium, solution concept, axiomatization, non-cooperative games, ordinal payoffs
\end{abstract}

\newpage

\onehalfspacing
\section{Introduction}
An axiomatic characterization of a solution concept brings out the concept's essential properties across a specific class of games. Predominant in cooperative game theory, the axiomatic method has been employed to study solution concepts for non-cooperative games too, especially Nash equilibrium and its refinements. Most recently, the axiomatic analysis of Nash equilibrium has been revitalized by the work of \cite{voorneveld2019axioms}, \cite{brandl2024axioms} and \cite{Sandomirskiy-et-al2025}. The present paper contributes to this active research stream by offering a new axiomatization of Nash equilibrium. I show that, on a sufficiently rich class of games, the Nash equilibrium correspondence is fully characterized by four simple and intuitive axioms, two of which are inspired by abstract choice theory. The axiomatization has broad scope: it applies to normal-form games of complete information with finitely many players and strategy sets of any (finite or infinite) cardinality; it covers Nash equilibria in pure and mixed strategies alike; and although players' preferences over strategy profiles must be complete and transitive, they are not required to have a utility or expected utility representation.

I will now give an informal description of the four axioms. Suppose a game $G'$ is obtained out of another game $G$ by removing some strategies of one or more players. The first axiom is called \textit{independence of irrelevant strategies} (IIS) and says that if a strategy profile belonging to both $G$ and $G'$ is a solution to $G$, then that strategy profile is a solution to the ``smaller'' game $G'$ too.

As for the second axiom, suppose that a game $G$ can be formed by merging two games $G'$ and $G''$. The \textit{merging consistency} (MC) axiom posits that any strategy profile that solves both $G'$ and $G''$ must solve the ``larger'' game $G$ as well.

The third axiom, \textit{invariance to strictly dominated strategies} (ISDS), postulates that whenever a game $G'$ can be formed by removing only strictly dominated strategies from another game $G$, then $G$ and $G'$ have the very same set of solutions.

The last axiom is called \textit{joint optimality} (JO). It says that if every player in a given game has a strategy that maximizes her preferences irrespective of what the other players are going to choose, then the strategy profile consisting of all those preference-maximizing strategies is a solution to the game.

IIS and MC are adaptions to non-cooperative games of well-known axioms from abstract choice theory. Specifically, IIS is a contraction consistency property that mirrors Sen's property $\alpha$ \citep{sen1971choice}; MC is about expansion consistency and is the counterpart of Sen's property $\gamma$ \citep{sen1971choice}. Properties $\alpha$ and $\gamma$, in turn, can be traced back to Postulates 4 and 10 in \cite{chernoff1954rational}, which is a classic article on one-person decision problems under uncertainty. ISDS is similar to Postulate 5 in, once again, \cite{chernoff1954rational}, whereas JO is a minimal rationality requirement. Although IIS is used by \cite{salonen1992axioms}, \cite{peleg-tijs1996consistency} and \cite{forgo2015axiomsNash} in their axiomatic analyses of Nash equilibrium, all of which are discussed below, I could not find any previous axiomatization that adopts any one of the other three axioms.

\paragraph{Related work.}
The consistency principle \citep{peleg-tijs1996consistency} is possibly the central axiom in the early axiomatic studies of Nash equilibrium, which are surveyed in \citet[Section 12]{thomson2001axiomatic-method}. As I argue in more detail in Section \ref{sub:OtherAxioms}, consistency plays the same role as IIS does in this paper. Nonetheless, the two axioms are logically independent since they apply to different classes of games: consistency applies to games with different sets of players, whereas IIS applies to games having the very same set of players. \cite{peleg-tijs1996consistency} also offer an alternative axiomatization of Nash equilibrium which is based on IIS (instead of consistency) and \textit{converse} consistency. The latter is analogous to MC and, just like consistency, applies to games with different sets of players. I show in Section \ref{sub:OtherAxioms} that converse consistency and MC are logically independent.

The IIS axiom is also used by \cite{salonen1992axioms} to axiomatize the Nash equilibrium correspondence. His work is a partial characterization of Nash equilibrium and is restricted to games with continuous and quasi-concave payoff functions and to solution concepts that select a non-empty set of strategy profiles for every game. By contrast, in this paper I produce a full characterization and do not put any of those restrictions.

\cite{forgo2015axiomsNash} shares many similarities with this paper in that his axioms (one of which is IIS) are inspired by choice theory and apply to games having the same set of players. There are three essential differences, though. First, as I show in Section \ref{sub:OtherAxioms}, the MC axiom introduced here implies the axiom of converse independence of irrelevant strategies of \cite{forgo2015axiomsNash} but the converse is not true. Second, while \cite{forgo2015axiomsNash} restricts his analysis to games in which every player has finitely many strategies, I do not put any constraint on the cardinality of strategy sets. Third, his axiomatization applies only to games having at least one Nash equilibrium, whereas this paper's axiomatization is strictly more general and covers even games in which Nash equilibria do not exist.

The most recent axiomatic analyses of Nash equilibrium are offered by \cite{voorneveld2019axioms}, \cite{brandl2024axioms} and \cite{Sandomirskiy-et-al2025}. There are fundamental differences between these papers and mine. For one thing, I cover solution concepts that may select an empty set of strategy profiles and I study axioms that apply to pure and mixed strategies alike. Conversely, \cite{brandl2024axioms} and \cite{Sandomirskiy-et-al2025} focus on solution concepts that select a non-empty set of mixed strategy profiles for every game. For another thing, the IIS, MC and ISDS axioms introduced here are defined for games whose sets of strategy profiles are nested. This contrasts with the independence of identical consequences of \cite{voorneveld2019axioms} and the consistency axiom of \cite{brandl2024axioms}, both of which apply to games that differ in their payoff functions but not in their sets of strategy profiles.

\paragraph{Paper structure.}
Preliminary definitions are given in Section \ref{sec:Prelim}. Axioms and the axiomatic characterization of Nash equilibrium are in Section \ref{sec:Ax}. Section \ref{sec:Disc} concludes with a discussion of the logical independence of the four axioms, the special case of one-player games, the domain of the axiomatization, and a comparison with other axioms from the existing literature.

\section{Preliminary definitions}\label{sec:Prelim}
A game in normal form is a list  $G = \langle I, \left(S_i\right)_{i\in I}, \left(\succeq_i\right)_{i\in I}\rangle$, where $I= \left\lbrace 1, \dots, n\right\rbrace$ is a set of $n\geq 1$ players; for each player $i\in I$, the set $S_i$ is a non-empty set of (pure or mixed) strategies and $\succeq_i$ is a complete and transitive preference relation over the set of strategy profiles $S = S_1 \times S_2 \times \dots \times S_n$. The strict part of a preference relation $\succeq_i$ is denoted by $\succ_i$ while $\sim_i$ denotes its symmetric part. Preferences need not have a utility or expected utility representation.

For any $i\in I$, a strategy profile $s = (s_1, s_2, \dots, s_n)$ can also be written as $(s_i, s_{-i})$, where $s_i$ is the $i$th element of $s$ while $s_{-i}$ is the list of all other elements of $s$. In addition, $S_{-i}$ stands for the set $\times_{j\in I\setminus \{i\}} S_j$. It is understood that $(s_i, s_{-i})$ reduces to $s_i$ when $n = 1$. It is equally understood that, in the definitions below, the expression ``for all $s_ {-i} \in S_{-i}$'' should be ignored when $n = 1$, in which case $S_{-i}$ is empty.

Given two games $G = \langle I, \left(S_i\right)_{i\in I}, \left(\succeq_i\right)_{i\in I}\rangle$ and $G' = \langle I, \left(S'_i\right)_{i\in I}, \left(\succeq'_i\right)_{i\in I}\rangle$ having the same set of players $I$, we say that $G'$ is a \textbf{reduction}\footnote{Reductions arise naturally in iterated elimination procedures---see, e.g., \cite{pearce1984rationalizable}.} of $G$ if, for all $i\in I$, we have $\emptyset \neq S_i' \subseteq S_i$ and the preference relation $\succeq'_i$ is the restriction of $\succeq_i$ to $S'$, that is, $\succeq'_i \; = \; \succeq_i \cap \left(S' \times S' \right)$. If, in addition, $\vert S'_j \vert = 1$ for some $j\in I$ and, for all $i\in I\setminus\{j\}$, we have $S_i' = S_i$ or $\vert S'_i \vert = 1$, then we say that the reduction $G'$ has a \textbf{dummy player}; similarly, $G'$ is said to have a \textbf{quasi-dummy player} if $\vert S_j' \vert = 2$ for some $j\in I$ and, for all $i\in I\setminus\{j\}$, we have $S_i' = S_i$ or $1 \leq \vert S'_i \vert \leq 2$. Notice that a reduction can have a dummy and a quasi-dummy player at once.

The game $G'= \langle I, \left(S'_i\right)_{i\in I}, \left(\succeq'_i\right)_{i\in I}\rangle$ is called a \textbf{strict reduction} of $G = \langle I, \left(S_i\right)_{i\in I}, \left(\succeq_i\right)_{i\in I}\rangle$ if $G'$ is a reduction of $G$ satisfying the following conditions:
\begin{itemize}
\item there is a player $j\in I$ for which $S_j \setminus S'_j \neq \emptyset$;
\item for all players $i\in I$, if $s_i \in S_i \setminus S'_i$, then there exists a strategy $s'_i\in S'_i$ such that
\begin{equation*}
(s'_i, s_{-i}) \succ_i (s_i, s_{-i}) \text{ for all } s_{-i}\in S_{-i}. 
\end{equation*}
\end{itemize}
In words, a strict reduction $G'$ is obtained by removing only some strictly dominated strategies from $G$. Any game has at least one reduction (namely, the game itself) while it may not have any strict reduction.

Given two Cartesian products $S = S_1 \times S_2 \times \dots \times S_n$ and $T = T_1 \times T_2 \times \dots \times T_n$, let
\begin{equation}\label{eq:CartUnion}
	S \vee T := \left(S_1 \cup T_1 \right) \times \left(S_2 \cup T_2 \right) \times \dots \times \left(S_n \cup T_n \right).
\end{equation}
In words, the set $S \vee T$ is the smallest Cartesian product that contains $S \cup T$.

Let $\Gamma$ be a non-empty class of games. A \textbf{solution concept} on $\Gamma$ is a mapping $\varphi$ that assigns to each game $G = \langle I, \left(S_i\right)_{i\in I}, \left(\succeq_i\right)_{i\in I}\rangle$ in $\Gamma$ a (possibly empty) set of strategy profiles $\varphi(G)$ such that $\varphi(G) \subseteq S$.

A \textbf{Nash equilibrium} of a game $G = \langle I, \left(S_i\right)_{i\in I}, \left(\succeq_i\right)_{i\in I}\rangle$ is a strategy profile $s$ such that, for all $i\in I$,
\begin{equation*}
	(s_i, s_{-i}) \succeq_i (t_i, s_{-i}) \text{ for all } t_i\in S_i.
\end{equation*}
The solution concept that assigns to each game its (possibly empty) set of all Nash equilibria is denoted by $\varphi^{\mathsf{NE}}$ and is often called \textit{Nash equilibrium correspondence}.

Finally, a non-empty class of games $\Gamma$ is \textit{closed with respect to reductions with dummy or quasi-dummy players} (in short: \textbf{d-closed}) if, for each game $G$ in $\Gamma$, the class $\Gamma$ contains all reductions of $G$ that have a dummy or a quasi-dummy player.

\section{Axiomatic analysis}\label{sec:Ax}
\subsection{Axioms}\label{sub:Axioms}
Each of the following axioms is defined on a non-empty class of games $\Gamma$.

\begin{itemize}
\item \textbf{Independence of irrelevant strategies (IIS).} If $G = \langle I, \left(S_i\right)_{i\in I}, \left(\succeq_i\right)_{i\in I}\rangle$ and $G' = \langle I, \left(S'_i\right)_{i\in I}, \left(\succeq'_i\right)_{i\in I}\rangle$ are in $\Gamma$ and $G'$ is a reduction of $G$, then
\begin{equation*}
	S' \cap \varphi(G) \subseteq \varphi(G').
\end{equation*}
That is, if a strategy profile is a solution to a game $G$, then that strategy profile must be a solution to every reduction of $G$ to which it belongs. IIS is a contraction consistency property: it is the natural adaptation to strategic games of Postulate 4 in \cite{chernoff1954rational}, which is also known as Sen's property $\alpha$\footnote{A choice function $c$ on a set $X$ satisfies property $\alpha$ if for all $x \in X$ and all non-empty $A,B \subseteq X$, if $x \in c(A)$ and $x \in B \subseteq A$ then $x\in c(B)$.} in choice theory \citep{sen1971choice}. IIS is used in the axiomatic analysis of Nash equilibrium by \cite{salonen1992axioms}, \cite{peleg-tijs1996consistency}, and \cite{forgo2015axiomsNash}.

\item \textbf{Merging consistency (MC).} For any game $G = \langle I, \left(S_i\right)_{i\in I}, \left(\succeq_i\right)_{i\in I}\rangle$ in $\Gamma$, if two of its reductions $G' = \langle I, \left(S'_i\right)_{i\in I}, \left(\succeq'_i\right)_{i\in I}\rangle$ and $G'' = \langle I, \left(S''_i\right)_{i\in I}, \left(\succeq''_i\right)_{i\in I}\rangle$ are in $\Gamma$ and are such that $S' \vee S'' = S$, then
\begin{equation*}
\varphi (G') \cap \varphi (G'') \subseteq \varphi (G).
\end{equation*}

MC is about expansion consistency. It says that, if a game $G$ can be formed by merging two of its reductions $G'$ and $G''$, then any strategy profile that solves both $G'$ and $G''$ must solve $G$ as well. As per \eqref{eq:CartUnion}, the merging of $G'$ and $G''$ consists in taking the player-by-player union of strategy sets from $G'$ and $G''$. MC is an adaptation to strategic games of Postulate 10 in \cite{chernoff1954rational} and is analogous to Sen's property $\gamma$\footnote{A choice function $c$ on a set $X$ satisfies property $\gamma$ if for all non-empty families $\{A_j\}_{j\in J}$ such that $\cup_{j\in J} A_j = B \subseteq X$, it holds that $\cap_{j\in J} c(A_j) \subseteq c(B)$.} \citep{sen1971choice}.

\item \textbf{Invariance to strictly dominated strategies (ISDS).} If $G$ and $G'$ are in $\Gamma$ and $G'$ is a strict reduction of $G$, then $\varphi(G) = \varphi(G')$.

ISDS is an invariance property based on a minimal form of rationality. It says that the solution set of a game is left unchanged by the removal of strictly dominated strategies. ISDS is similar to Postulate 5 in \cite{chernoff1954rational}, which formalizes invariance to weakly dominated strategies in one-person decision problems under uncertainty.

\item \textbf{Joint optimality (JO).} If $G = \langle I, \left(S_i\right)_{i\in I}, \left(\succeq_i\right)_{i\in I}\rangle$ is in $\Gamma$ and $s \in S$ is a strategy profile such that, for all $i\in I$,  
\begin{equation*}
	(s_i, s_{-i}) \succeq_i (t_i, s_{-i}) \text{ for all } t_i \in S_i \text{ and all } s_{-i} \in S_{-i},
\end{equation*}
then $s \in \varphi(G)$.

That is, if \textit{every} player has a strategy that maximizes her own preference relation \textit{irrespective of} what the other players are going to choose, then any profile of such preference-maximizing strategies is a solution to the game.\footnote{JO resembles the unanimity axiom in \cite{voorneveld2019axioms}. Nonetheless, JO and unanimity are logically independent.} An immediate consequence of JO is that any profile of weakly dominant strategies is a solution.
\end{itemize}

\subsection{Characterization}\label{sec:AxChar}
The following lemma is the basis for the characterization of Nash equilibrium in Theorem \ref{thm:IFF}.

\begin{lemma}\label{lemma:SolNash}
Let $\Gamma$ be a d-closed class of games and $\varphi$ a solution concept on $\Gamma$.
\begin{itemize}
\item[(a)] If $\varphi$ satisfies IIS, ISDS and JO, then $\varphi(G) \subseteq \varphi^{\mathsf{NE}}(G)$ for all $G$ in $\Gamma$.
\item[(b)] If $\varphi$ satisfies MC and JO, then $\varphi^{\mathsf{NE}}(G) \subseteq \varphi(G)$ for all $G$ in $\Gamma$.
\end{itemize}
\end{lemma}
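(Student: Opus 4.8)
For part (a) I would argue by contradiction, converting a profitable deviation into a strict domination that ISDS can exploit. Take $s \in \varphi(G)$ and suppose $s \notin \varphi^{\mathsf{NE}}(G)$, so some player $j$ has a deviation $t_j \in S_j$ with $(t_j, s_{-j}) \succ_j (s_j, s_{-j})$. The plan is first to pass to the reduction $G'$ in which player $j$ keeps only $\{s_j, t_j\}$ while every other player is frozen at $s_i$. This reduction has a quasi-dummy player (player $j$, with the rest dummies), so d-closedness places it in $\Gamma$; since $s \in S'$, IIS yields $s \in \varphi(G')$. In $G'$ the opponents are pinned at the single profile $s_{-j}$, so $t_j$ strictly dominates $s_j$, and deleting $s_j$ produces a strict reduction $G''$ (again in $\Gamma$, as it has a dummy player). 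ISDS then forces $\varphi(G') = \varphi(G'')$, whence $\varphi(G') \subseteq S''$; but $s \notin S''$ since $s_j \neq t_j$, contradicting $s \in \varphi(G')$. The case $n=1$ is identical with $S_{-j}$ empty. Note that this direction uses only IIS and ISDS.

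For part (b) I would build $G$ up from reductions in each of which $s$ is jointly optimal, and then merge them. Fix a Nash equilibrium $s$. For each player $k$, let $G^{(k)}$ be the reduction in which player $k$ retains her full set $S_k$ while every other player is frozen at $s_i$. Each $G^{(k)}$ has a dummy player, hence lies in $\Gamma$, and in $G^{(k)}$ the profile $s$ is jointly optimal: the only deviating player $k$ faces just the opponent profile $s_{-k}$, against which $s_k$ is a best reply by the Nash property, while every frozen player has nothing to deviate to. Thus JO gives $s \in \varphi(G^{(k)})$ for every $k$.

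It remains to merge the $G^{(k)}$ into $G$. Writing $S^{(k)}$ for the strategy-profile set of $G^{(k)}$, one has $\bigvee_{k=1}^{n} S^{(k)} = S$ because each $s_i \in S_i$. Since MC is stated for two games only, I would iterate it. Let $H^{(m)}$ be the reduction of $G$ in which players $1,\dots,m$ keep their full strategy sets and players $m+1,\dots,n$ are frozen at $s_i$, and write $S^{H^{(m)}}$ for its profile set. Then $S^{H^{(m-1)}} \vee S^{(m)} = S^{H^{(m)}}$; moreover $H^{(m-1)}$ and $G^{(m)}$ are reductions of $H^{(m)}$ that lie in $\Gamma$ (each has a dummy player when $m < n$, while $H^{(n)} = G$). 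Applying MC with target $H^{(m)}$ gives $\varphi(H^{(m-1)}) \cap \varphi(G^{(m)}) \subseteq \varphi(H^{(m)})$, so starting from $H^{(1)} = G^{(1)}$ and inducting on $m$ delivers $s \in \varphi(H^{(n)}) = \varphi(G)$.

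The main obstacle lies in part (b): bootstrapping the two-game MC into a statement about the $n$-fold merge while keeping every intermediate game inside $\Gamma$. This is exactly what d-closedness is designed to secure, since all the games $G^{(k)}$ and $H^{(m)}$ are reductions with dummy players. The delicate point in part (a), by contrast, is recognizing that a single profitable deviation becomes a genuine strict domination only after the opponents are shrunk to the single profile $s_{-j}$, which is precisely what makes ISDS applicable.
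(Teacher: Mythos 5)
Your proof is correct and follows essentially the same construction as the paper's: the same quasi-dummy/dummy reductions $G'$ and $G''$ in part (a), and in part (b) the same dummy-player games $G^k$ merged pairwise into $G$ via iterated applications of MC. Your observation that part (a) actually needs only IIS and ISDS---since $\varphi(G') = \varphi(G'') \subseteq S''$ already excludes $s$ without having to pin down $\varphi(G'')$ exactly---is a valid minor sharpening of the paper's argument, which invokes JO to conclude $\varphi(G'') = \{(t_j, s_{-j})\}$.
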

\begin{proof}
To prove part (a), suppose a solution concept $\varphi$ on a d-closed class $\Gamma$ satisfies IIS, ISDS and JO. By way of contradiction, suppose there are a game $G = \langle I, \left(S_i\right)_{i\in I}, \left(\succeq_i\right)_{i\in I}\rangle$ in $\Gamma$ and a strategy profile $s = (s_1, \dots, s_n) \in \varphi(G)$ that is not a Nash equilibrium of $G$. Thus, there must exist a player $j \in I$ and a strategy $t_j \in S_j$ such that $(t_j, s_{-j}) \succ_j (s_j, s_{-j})$.

Now take the games $G' = \langle I, \left(S'_i\right)_{i\in I}, \left(\succeq'_i\right)_{i\in I}\rangle$ and $G'' = \langle I, \left(S''_i\right)_{i\in I}, \left(\succeq''_i\right)_{i\in I}\rangle$ such that:
\begin{itemize}
\item $S'_j = \{s_j, t_j\}$ and $S''_j = \{t_j\}$;
\item for all $i\in I\setminus{\{j\}}$, we have $S'_i = S''_i = \{s_i\}$;
\item for all $i\in I$, the preference relations $\succeq'_i$ and $\succeq''_i$ are the restrictions of $\succeq_i$ to $S'\times S'$ and $S''\times S''$, respectively.
\end{itemize}

The games $G'$ and $G''$ are both in $\Gamma$ since they are reductions of $G$ with dummy or quasi-dummy players. Thus, we get $\varphi(G'') = \{(t_j, s_{-j})\}$ from JO and the fact that all players in $G''$ are dummy. Further, since $G''$ is a strict reduction of $G'$, we get $\varphi(G'') = \varphi(G')$ from ISDS. But then we have $s = (s_j, s_{-j})\in S' \cap \varphi(G)$ and $ s \notin \varphi(G')$, thus contradicting IIS.

\bigskip

For part (b), suppose a solution concept $\varphi$ on a d-closed class $\Gamma$ satisfies MC and JO. Take any game $G = \langle I, \left(S_i\right)_{i\in I}, \left(\succeq_i\right)_{i\in I}\rangle$ in $\Gamma$ and let $s = (s_1, \dots, s_n) \in \varphi^{\mathsf{NE}}(G)$. We need to show that $s \in \varphi(G)$.

When $n = 1$, the set membership $s \in \varphi(G)$ follows immediately from the JO axiom. From now on, suppose that $n \geq 2$. We are going to construct two finite sequences of games. The first sequence consists of $n$ games $G^1, \dots, G^n$ defined as follows. For all $k \in I$, the game $G^k = \langle I, \left(S^k_i\right)_{i\in I}, \left(\succeq^k_i\right)_{i\in I}\rangle$ is such that:
\begin{itemize}
\item $S^k_k = S_k$;
\item for all $i\in I\setminus \{k\}$, we have $S^k_i = \{s_i\}$;
\item for all $i\in I$, the preference relation $\succeq^k_i$ is the restriction of $\succeq_i$ to $S^k\times S^k$.
\end{itemize}

Each game $G^k$ is a reduction of $G$ with a dummy player, hence it is in $\Gamma$. It is clear that $s \in S^k$ for all $k \in I$. Further, since $s$ is a Nash equilibrium by assumption, we have $s \in \varphi(G^k)$ for all $k\in I$ by virtue of JO.

The second sequence consists of $n-1$ games $H^1, \dots, H^{n-1}$ constructed as follows. The game $H^1 = \langle I, \left(T^1_i\right)_{i\in I}, (\hat{\succeq}^{1}_i)_{i\in I}\rangle$ is such that:
\begin{itemize}
	\item for all $i\in I$, we have $T^1_i = S^1_i \cup S^2_i$;
	\item for all $i\in I$, the preference relation $\hat{\succeq}^{1}_i$ is the restriction of $\succeq_i$ to $T^1 \times T^1$.
\end{itemize}
For all $\ell \in \{2, \dots, n -1\}$, the game $H^{\ell} = \langle I, \left(T^{\ell}_i\right)_{i\in I}, (\hat{\succeq}^{\ell}_i)_{i\in I}\rangle$ is such that:
\begin{itemize}
	\item for all $i\in I$, we have $T^\ell_i = T^{\ell - 1}_i \cup S^{\ell + 1}_i$;
	\item for all $i\in I$, the preference relation $\hat{\succeq}^{\ell}_i$ is the restriction of $\succeq_i$ to $T^{\ell} \times T^{\ell}$.
\end{itemize}

The sequence just defined is such that $T^{1} = S^1 \vee S^2$ and $T^{\ell} = T^{\ell - 1} \vee S^{\ell + 1}$ for all $\ell \in \{2, \dots, n -1\}$. It is easy to verify that $H^{n - 1} = G$; further, when $n\geq 3$, all games $H^1, \dots, H^{n-2}$ are reductions of $G$ with a dummy player, hence they all belong to $\Gamma$. Putting all this together, since $s \in T^{\ell}$ for all $\ell \in \{1, \dots, n -1\}$, we can apply repeatedly the MC axiom to obtain $s \in \varphi(H^{\ell})$ for all $\ell \in \{1, \dots, n -1\}$, from which the desired set membership follows.
\end{proof}

Part (a) of the lemma above implies that any solution concept satisfying IIS, ISDS and JO on a d-closed class of games must be a weak refinement of Nash equilibrium. Part (b) implies that any strict refinement of Nash equilibrium must violate at least one of MC and JO.

The central result of this paper is that the Nash equilibrium correspondence is fully characterized by the four axioms introduced in the previous subsection.

\begin{theorem}\label{thm:IFF}
Let $\Gamma$ be a d-closed class of games. A solution concept $\varphi$ on $\Gamma$ satisfies IIS, MC, ISDS and JO if and only if $\varphi (G) = \varphi^{\mathsf{NE}} (G)$ for all $G$ in $\Gamma$.
\end{theorem}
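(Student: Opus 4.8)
The plan is to prove the two implications separately, and the $(\Rightarrow)$ direction is essentially free. If a solution concept $\varphi$ on $\Gamma$ satisfies all four axioms, then in particular it satisfies IIS, ISDS and JO, so part (a) of Lemma~\ref{lemma:SolNash} gives $\varphi(G) \subseteq \varphi^{\mathsf{NE}}(G)$; it also satisfies MC and JO, so part (b) gives $\varphi^{\mathsf{NE}}(G) \subseteq \varphi(G)$. Combining the two inclusions yields $\varphi(G) = \varphi^{\mathsf{NE}}(G)$ for every $G$ in $\Gamma$, so no new argument is required here.

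The substantive content lies in the $(\Leftarrow)$ direction, where I must verify that $\varphi^{\mathsf{NE}}$ itself satisfies IIS, MC, ISDS and JO on an arbitrary d-closed class $\Gamma$. Two of these are immediate. For JO, the hypothesis that each $s_i$ maximizes $\succeq_i$ against \emph{every} $s_{-i}$ is logically stronger than the Nash best-response condition at the single relevant profile, so $s \in \varphi^{\mathsf{NE}}(G)$. For IIS, if $s \in S' \cap \varphi^{\mathsf{NE}}(G)$ for a reduction $G'$, then restricting each best-response inequality from $S_i$ to the smaller set $S'_i \subseteq S_i$ and using that $\succeq'_i$ is the restriction of $\succeq_i$ (the compared profiles $(s_i,s_{-i})$ and $(t_i,s_{-i})$ lying in $S'$ because $s \in S'$ and $t_i \in S'_i$) shows $s \in \varphi^{\mathsf{NE}}(G')$.

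The two remaining verifications carry the real work. For MC, suppose $G'$ and $G''$ are reductions of $G$ with $S' \vee S'' = S$, so that $S_i = S'_i \cup S''_i$ for every $i$, and let $s \in \varphi^{\mathsf{NE}}(G') \cap \varphi^{\mathsf{NE}}(G'')$; then $s \in S' \cap S''$. For an arbitrary deviation $t_i \in S_i$ I would split on whether $t_i \in S'_i$ or $t_i \in S''_i$ and invoke the corresponding best-response inequality, from $G'$ in the first case and from $G''$ in the second, to obtain $(s_i,s_{-i}) \succeq_i (t_i,s_{-i})$ in each case; hence $s \in \varphi^{\mathsf{NE}}(G)$. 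For ISDS, with $G'$ a strict reduction of $G$, I prove the two inclusions separately. That $\varphi^{\mathsf{NE}}(G) \subseteq \varphi^{\mathsf{NE}}(G')$ holds because a Nash equilibrium of $G$ can never place a player on a removed (hence strictly dominated) strategy, so every equilibrium of $G$ lies in $S'$ and, by the IIS property already verified, remains an equilibrium of $G'$. For the reverse inclusion, given $s \in \varphi^{\mathsf{NE}}(G')$ and a deviation $t_i \in S_i$, the only nontrivial case is $t_i \in S_i \setminus S'_i$, where the strict-reduction hypothesis supplies $t'_i \in S'_i$ with $(t'_i, s_{-i}) \succ_i (t_i, s_{-i})$; chaining this with the equilibrium inequality $(s_i,s_{-i}) \succeq_i (t'_i,s_{-i})$ yields $(s_i,s_{-i}) \succ_i (t_i,s_{-i})$.

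I expect the ISDS verification to be the main (though still modest) obstacle, precisely because it is the one place where transitivity of the purely ordinal preferences does essential work: lacking a utility representation, the chaining step must be justified by transitivity of $\succeq_i$ applied to a $\succeq_i$-then-$\succ_i$ sequence rather than by numerical comparison. Throughout I would be careful that every inequality invoked is licensed by the appropriate restricted preference relation, confirming that the profiles being compared actually lie in the smaller strategy space, since in a reduction all preferences are defined only as restrictions of those in $G$.
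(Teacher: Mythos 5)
Your proposal is correct and follows essentially the same route as the paper: the direction from the axioms to $\varphi = \varphi^{\mathsf{NE}}$ is obtained by combining parts (a) and (b) of Lemma~\ref{lemma:SolNash}, and the converse is established by directly verifying that the Nash equilibrium correspondence satisfies each axiom, with the MC case handled by splitting a deviation $t_i \in S_i = S'_i \cup S''_i$ exactly as in the paper. The only difference is that you spell out the IIS, ISDS and JO verifications (including the transitivity step in ISDS) that the paper dismisses as easy to check, and your details are accurate.
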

\begin{proof}
The ``only if'' part of the theorem is a corollary of Lemma \ref{lemma:SolNash}.

For the ``if'' part, it is easy to check that if $s$ is a Nash equilibrium of a game $G$, then $s$ is a Nash equilibrium of every reduction of $G$ in which $s$ is a feasible strategy profile, thus proving IIS. It is also easy to check that the set of all Nash equilibria of a game is left unchanged by the removal of any strictly dominated strategy, from which ISDS follows. In addition, JO is an immediate consequence of the definition of a Nash equilibrium. Finally, to show that MC holds, let $G = \langle I, \left(S_i\right)_{i\in I}, \left(\succeq_i\right)_{i\in I}\rangle$ be a game in $\Gamma$ and take two of its reductions $G' = \langle I, \left(S'_i\right)_{i\in I}, \left(\succeq'_i\right)_{i\in I}\rangle$ and $G'' = \langle I, \left(S''_i\right)_{i\in I}, \left(\succeq''_i\right)_{i\in I}\rangle$ in $\Gamma$ such that $S' \vee S'' = S$. Suppose that
\begin{equation*}
s = (s_1, \dots, s_n) \in \varphi^{\mathsf{NE}} (G') \cap \varphi^{\mathsf{NE}} (G'').
\end{equation*}
Then, for all $i\in I$, we have $(s_i, s_{-i}) \succeq_i' (s'_i, s_{-i})$ for all $s_i' \in S_i'$ and $(s_i, s_{-i}) \succeq_i'' (s''_i, s_{-i})$ for all $s_i'' \in S_i''$. Since $\succeq_i'$ and $\succeq_i''$ are both restrictions of $\succeq_i$, we have $(s_i, s_{-i}) \succeq_i (t_i, s_{-i})$ for all $t_i \in S_i' \cup S''_i = S_i$, from which it follows that $s \in \varphi^{\mathsf{NE}} (G)$.
\end{proof}

\section{Discussion}\label{sec:Disc}
\subsection{Logical independence}
The examples in this subsection show that the four axioms introduced in Section \ref{sub:Axioms} are logically independent. For convenience, here I assume that preference relations are represented by payoff functions: each player $i$ has a payoff function $u_i: S \to \mathbb{R}$ such that $s \succeq_i t$ if and only if $u_i(s) \geq u_i(t)$.
\begin{example}\label{ex:LogigIndep1}
Suppose $\Gamma$ is the smallest d-closed class of games that contains the canonical Prisoner's Dilemma. The solution concept that assigns the empty set to each game in $\Gamma$ satisfies IIS, MC and ISDS but not JO. On the other hand, the solution concept that assigns to each game in $\Gamma$ its entire set of strategy profiles satisfies IIS, MC and JO but not ISDS. \qed
\end{example}

\begin{example}\label{ex:LogigIndep2}
Suppose $\Gamma$ is the smallest d-closed class containing the two-by-two game $G$ with the payoff matrix below.
\begin{figure}[ht]
\centering
\begin{game}{2}{2}
	& $L$    & $R$ \\
$U$      & $2,2$  & $0,0$ \\
$D$      & $1,1$  & $1,1$ 
\end{game}
\caption{Payoff matrix of $G$.}
\end{figure}

Consider the solution concept that assigns to each game in $\Gamma$ the set of its strong Nash equilibria.\footnote{A strong Nash equilibrium \citep{aumann1959} is a strategy profile from which no coalition of players can profitably deviate, given that the strategies of the other players remain fixed.} This solution concept satisfies IIS, ISDS and JO but not MC. To see why MC does not hold, consider the two reductions of $G$ whose sets of strategy profiles are $\left\lbrace U, D \right\rbrace \times \left\lbrace R \right\rbrace$ and $\left\lbrace D \right\rbrace \times \left\lbrace L, R \right\rbrace$. The strategy profile $(D,R)$ is a solution to both of these reductions and yet it is not a solution to $G$, thus violating MC.

Now consider the solution concept defined, for all $H \in \Gamma$, by:
\begin{equation*}
	\varphi(H) = \varphi^{\mathsf{NE}}(H) \cup \left\lbrace s \in S(H): s \sim_i t  \text{ for some  } t\in \varphi^{\mathsf{NE}}(H) \text{ and all } i\in I \right\rbrace,
\end{equation*}
where $S(H)$ is the set of strategy profiles of $H$. This solution concept satisfies MC, ISDS and JO but not IIS. To see why IIS does not hold, notice that $(D,L)$ is a solution to $G$ but not to the reduction that has $\left\lbrace U, D \right\rbrace \times \left\lbrace L \right\rbrace$ as its set of strategy profiles. \qed
\end{example}

\subsection{One-player games}
In one-player games, i.e., games in which $n = 1$, the axiomatization in Theorem \ref{thm:IFF} can be simplified by dropping the MC axiom. Specifically, the inclusion $\varphi^{\mathsf{NE}}(G) \subseteq \varphi(G)$ in part (b) of Lemma \ref{lemma:SolNash} becomes an immediate consequence of JO alone, thus making MC redundant.

As the following lemma shows, it is possible to drop both IIS and MC for solution concepts defined on a \textit{strictly closed} (instead of d-closed) class of one-player games. A non-empty class of games is said to be strictly closed if it contains all the strict reductions of every game contained in the class itself.

\begin{lemma}
Let $\Gamma$ be a strictly closed class of one-player games and let $\varphi$ be a solution concept on $\Gamma$.
\begin{itemize}
	\item[(a)] If $\varphi$ satisfies ISDS, then $\varphi(G) \subseteq \varphi^{\mathsf{NE}}(G)$ for all $G$ in $\Gamma$.
	\item[(b)] If $\varphi$ satisfies JO, then $\varphi^{\mathsf{NE}}(G) \subseteq \varphi(G)$ for all $G$ in $\Gamma$.
	\item[(c)] $\varphi$ satisfies ISDS and JO if and only if $\varphi (G) = \varphi^{\mathsf{NE}} (G)$ for all $G$ in $\Gamma$.
\end{itemize}
\end{lemma}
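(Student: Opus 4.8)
The plan is to prove each of the three parts for one-player games, where the preliminary observations make all the constructions from Lemma \ref{lemma:SolNash} unnecessary. In a one-player game, a Nash equilibrium is simply a strategy $s_1 \in S_1$ that maximizes $\succeq_1$ over all of $S_1$, i.e., $s_1 \succeq_1 t_1$ for all $t_1 \in S_1$. Crucially, the ``for all $s_{-i} \in S_{-i}$'' clauses in the definitions of Nash equilibrium and of JO collapse (since $S_{-1}$ is empty), so that here the defining condition of JO coincides verbatim with the defining condition of a Nash equilibrium. This is the key simplification that drives all three parts.

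For part (b), I would argue directly: if $s_1 \in \varphi^{\mathsf{NE}}(G)$, then by the one-player Nash condition $s_1$ satisfies $(s_1) \succeq_1 (t_1)$ for all $t_1 \in S_1$, which is precisely the hypothesis of JO. Hence JO yields $s_1 \in \varphi(G)$. This gives $\varphi^{\mathsf{NE}}(G) \subseteq \varphi(G)$ and requires no closure assumption at all.

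For part (a), I would argue by contradiction in the spirit of Lemma \ref{lemma:SolNash}(a), but using ISDS alone and the strict-closure hypothesis in place of d-closedness. Suppose $s_1 \in \varphi(G)$ is not a Nash equilibrium; then there is $t_1 \in S_1$ with $t_1 \succ_1 s_1$. I would form the strict reduction $G'$ of $G$ whose strategy set is $S_1 \setminus \{s_1\}$: since $s_1$ is strictly dominated by $t_1$ (and in a one-player game the domination condition $(t_1) \succ_1 (s_1)$ has no $s_{-1}$ to quantify over), $G'$ is indeed a strict reduction, and it lies in $\Gamma$ by strict closure. By ISDS, $\varphi(G) = \varphi(G')$, so $s_1 \in \varphi(G')$; but $s_1 \notin S_1' = S_1 \setminus \{s_1\}$, and since $\varphi(G') \subseteq S_1'$ we get $s_1 \notin \varphi(G')$, a contradiction. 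The one delicate point to check is that removing just $s_1$ still satisfies the first bullet of the strict-reduction definition (namely $S_1 \setminus S_1' = \{s_1\} \neq \emptyset$) and that this singleton removal leaves a non-empty strategy set; if $S_1 = \{s_1\}$ is a singleton, then $s_1$ is trivially the maximizer and hence already a Nash equilibrium, so no contradiction can arise and the case is vacuous. I would handle this degenerate case separately at the outset.

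Part (c) then follows by combining (a) and (b) together with the converse direction. The forward implication is immediate: (a) gives $\varphi(G) \subseteq \varphi^{\mathsf{NE}}(G)$ and (b) gives the reverse inclusion, so ISDS and JO together force $\varphi = \varphi^{\mathsf{NE}}$. For the converse, I would verify that $\varphi^{\mathsf{NE}}$ itself satisfies ISDS and JO on one-player games: JO holds because its hypothesis is exactly the Nash condition, and ISDS holds because, as already noted in the proof of Theorem \ref{thm:IFF}, the set of maximizers of $\succeq_1$ is unchanged by deleting strictly dominated (hence non-maximal) strategies. I expect the main obstacle to be purely bookkeeping: ensuring the strict-reduction construction in part (a) is legitimate in the boundary case of a singleton strategy set and confirming that strict closure, rather than d-closedness, is the right and sufficient hypothesis for the single game $G'$ invoked there.
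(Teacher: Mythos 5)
Your proposal is correct and follows essentially the same route as the paper: part (a) by deleting the non-maximal strategy $s$ to form a strict reduction lying in $\Gamma$ by strict closure and invoking ISDS, and parts (b) and (c) by observing that in one-player games the JO hypothesis coincides with the Nash condition. Your extra care about the singleton case is harmless but already subsumed, since the contradiction hypothesis supplies a $t_1 \succ_1 s_1$ and hence forces $\vert S_1\vert \geq 2$.
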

\begin{proof}
To show part (a), suppose by way of contradiction that $s \in \varphi(G)$ and $s \notin \varphi^{\mathsf{NE}}(G)$ for some strategy $s$ in a one-player game $G \in \Gamma$, with $\Gamma$ strictly closed. Then, there must exist a strategy $t$ in game $G$ such that $t \succ s$. Now take the reduction $G'$ of $G$ having the strategy set $S' = S \setminus \{s\}$, where $S$ is the strategy set of $G$. Clearly, $G'$ is a strict reduction of $G$, hence $G'$ is in $\Gamma$. Hence, by ISDS we have $\varphi (G) = \varphi (G')$, which contradicts the assumption that $s \in \varphi(G)$.

Parts (b) and (c) are immediate.
\end{proof}

\subsection{Domain}
The domain of the axiomatization in Theorem \ref{thm:IFF} is any d-closed class of games. For example, the class of all games is d-closed, as is the class of all finite games. By contrast, the class consisting only of one non-trivial game (e.g., the game $G$ in Example \ref{ex:LogigIndep2}) is not d-closed. More importantly, a class consisting only of games with convex strategy sets may not be d-closed, the reason being that the strategy set of a quasi-dummy player is not convex. A case in point is the class of games having compact, convex strategy spaces and continuous payoff functions that are quasi-concave in each player's own strategy.\footnote{An axiomatization of Nash equilibrium for the games in this class with concave payoff functions, as well as for the class of mixed extensions of finite games, is provided by \cite{norde1996selection}.} Another important example of a class that is not d-closed is the class of mixed extensions of finite games. If a solution concept $\varphi$ satisfies the four axioms in Theorem \ref{thm:IFF} and is defined on the class of mixed extensions of finite games, then it may not select the mixed strategy Nash equilibria of a given finite game $G$. In order for $\varphi$ to select all the mixed strategy Nash equilibria of $G$, it must be defined on a d-closed class containing the mixed extension of $G$.

The axiomatic characterization of Nash equilibrium in Section \ref{sec:AxChar} can be extended to the class $\Gamma^*$ of all games having at least one Nash equilibrium,\footnote{As \citet[p. 85]{peleg1996minimality} point out, $\Gamma^*$ ``is not a well-described class; it does not have an independent description.'' By contrast, a d-closed class of games is well described in that its definition does not depend on any solution concept.} which is not d-closed. Specifically, it is easy to verify that Lemma \ref{lemma:SolNash} and Theorem \ref{thm:IFF} hold true if one substitutes $\Gamma^*$ for a generic d-closed class of games $\Gamma$.

\subsection{Other axioms from the existing literature}\label{sub:OtherAxioms}
One of the principal axioms in the early axiomatic literature on non-cooperative games is \textit{consistency} (CONS), introduced by \cite{peleg-tijs1996consistency}. CONS formalizes the following principle. Let $s = (s_1, \dots, s_n)$ be a solution to a game $G$. Suppose that $k$ out of $n$ players commit to play according to $s$ and leave the game. Then, the restriction of $s$ to the remaining $n - k$ players must be a solution to the reduced game formed with these remaining players alone. CONS is analogous to IIS, in that both axioms constrain the solution sets of ``smaller'' games obtained from a ``larger'' one. Nonetheless, IIS and CONS are logically independent. \citet[Example 2]{ray2000consistency} proves that CONS does not imply IIS. The following example shows that IIS does not imply CONS. 
\begin{example}
Consider the solution concept $\varphi$ defined, on the class of all finite games, by $\varphi(G) =\varphi^{\mathsf{NE}} (G)$ if the game $G$ has an even number of players, and $\varphi(G) = \emptyset$ otherwise. It is easy to verify that $\varphi$ satisfies IIS but not CONS. \qed
\end{example}

\textit{Converse consistency} (COCONS) and \textit{converse independence of irrelevant strategies} (CIIS) were introduced by \cite{peleg-tijs1996consistency} and \cite{forgo2015axiomsNash}, respectively. Just like MC in this paper, both COCONS and CIIS prescribe how to extend solutions from ``smaller'' to ``larger'' games. Informally, COCONS holds when the following is true. Take a game $G$ with at least two players and let $s$ be a strategy profile in it. Next consider all reduced games that one can form by removing some players from $G$. If, for any subgroup of players $I'$, the restriction of $s$ to the players in $I'$ is a solution to the reduced game whose set of players is $I'$, then $s$ is a solution of the original game $G$. The following example shows that COCONS and MC are logically independent.
\begin{example}
Suppose $\Gamma$ is the smallest class containing the game $G$ from Example \ref{ex:LogigIndep2}, all the reductions of $G$, and all the one-player games that one can form out of $G$. Consider the solution concept $\varphi$ that assigns to each one-player game in $\Gamma$ its entire set of strategy profiles, and to each two-player game in $\Gamma$ its set of Nash equilibria. It is easy to check that $\varphi$ satisfies MC but not COCONS. Conversely, let $\varphi'$ be the solution concept that assigns the empty set to each one-player game in $\Gamma$, and the set of strong Nash equilibria to each two-player game in $\Gamma$. One can check that $\varphi'$ satisfies COCONS but not MC. \qed
\end{example}

To conclude, the CIIS axiom posits that, if a strategy profile $s$ in a game $G$ (that has at least three strategy profiles) is a solution to all the proper reductions of $G$ to which $s$ itself belongs, then $s$ is a solution to $G$. Although MC implies CIIS, the example below proves that the converse is false.

\begin{example}
Let $G'$ be a two-player game in which preferences are induced by the payoff matrix below.

\begin{figure}[ht]
\centering
\begin{game}{3}{2}
	& $L$    & $R$ \\
$U$      & $2,1$  & $0,0$ \\
$C$      & $0,0$  & $1,2$ \\
$D$      & $2,1$  & $0,0$ 
\end{game}
\caption{Payoff matrix of $G'$.}
\end{figure}

Suppose $\Gamma$ is the smallest class containing $G'$ and all its reductions. For all games $H$ in $\Gamma$, let $S_i(H)$ be player $i$'s set of strategies in $H$. Consider the solution concept $\varphi$ on $\Gamma$ defined as follows. If $S_2(H) = \{R\}$  and $\vert S_1(H) \times S_2(H)\vert = 2$, then $\varphi(H) = \emptyset$. Otherwise, we have:
\begin{equation*}
\varphi(H) = \{s \in \varphi^{\mathsf{NE}}(H): \not\exists t \in \varphi^{\mathsf{NE}}(H) \text{ for which } t \succ_1 s\}.
\end{equation*}

One can verify that $\varphi$ satisfies CIIS but not MC.
\qed
\end{example}



\bibliographystyle{plainnat}
\bibliography{biblio}

\end{document}